\def\BibTeX{{\rm B\kern-.05em{\sc i\kern-.025em b}\kern-.08em
    T\kern-.1667em\lower.7ex\hbox{E}\kern-.125emX}}
\begin{document}
\title{Channel Inclusion Beyond Discrete Memoryless Channels
\author{Cihan Tepedelenlioglu   \;\; School of ECEE   \;\; Arizona State University}
}
\maketitle
\newcommand{\D}{\mathbf{D}}
\newcommand{\A}{\mathbf{A}}
\newtheorem{thm}{Theorem}
\def\bK{{\bf K}}
\def\ti{\Theta_{\rm i}}
\def\to{\Theta_{\rm o}}
\def\bB{{\bf U}_{\rm i}}
\def\bY{{\bf Y}}
\def\bH{{\bf H}}
\def\bV{{\bf V}}
\def\bX{{\bf X}}
\def\bC{{\bf U}_{\rm o}}
\def\bU{{\bf U}}
\def\ui{{\bf U}_{\rm i}}
\def\uo{{\bf U}_{\rm o}}
\def\bV{{\bf V}}
\def\bZ{{\bf Z}}
\def\bI{{\bf I}}
\def\bK{{\bf K}}
\def\bR{{\bf R}}
\def\bT{{\bf T}}
\def\a{\alpha}
\def\cX{{\cal X}}
\def\cY{{\cal Y}}
\def\by{{\bf Y}}
\def\bW{{\bf W}}
\def\bH{{\bf H}}
\def\bx{{\bf X}}
\def\bv{{\bf V}}
\def\bS{{\bf \Sigma}}
\def\bL{{\bf \Lambda}}
\def\inc{\subseteq}
\renewcommand \baselinestretch{.99}
\begin{abstract}
Partial ordering of communication channels has applications in performance analysis, and goes beyond comparisons of channels just on the basis of their Shannon capacity or error probability.  Shannon defined a partial order of channel inclusion based on convex mixture of input/output degradations of a discrete memoryless channel (DMC). In this paper, extensions to channels other than DMCs are considered. In particular, additive noise channels and phase degraded channels, and  multiple input multiple output (MIMO) linear Gaussian channels (LGC) are considered. For each of these models, the conditions under which the partial order becomes a lattice are also discussed.
\end{abstract}

\section{Introduction}
Partial orders between communication channels have been introduced in the literature \cite{partial,shannonlattice,nasser1,nasser2,rag,polyanskiy,yuan}, and have applications in single and multiuser performance bounds. These orders reveal more structure than comparing channels using their Shannon capacities, which provides a total order on communication channels. {\it Channel inclusion} is a partial order over discrete memoryless channels (DMC) proposed by Shannon in \cite{partial}. According to channel inclusion, the ``worse (included) channel'' is a convex mixture of input/output degraded versions of the ``better (including) channel'' (please see also Figure 1). Shannon showed that if two channels sharing the same message set are ordered through inclusion, then for every block length, and every code and decoding rule for the worse channel, there exists a deterministic code and decoding rule for the better channel such that its average error probability is at most as that of the code for the worse channel. This latter order, termed ``{\it better in the Shannon sense}'' in \cite[pp. 99]{ck}, is implied by channel inclusion, but different from it.
Shannon also studied how inclusion is affected by ``sum" and ``product" of communication channels, and commented on when such a partial order has lattice structure in the sense that greatest lower bounds and least upper bounds of pairs of channels exist within this partial order.
The consequences of such a lattice structure between information elements were further elaborated  upon in  \cite{shannonlattice}. Raginsky in \cite{rag} generalized this partial order to arbitrary alphabets and introduced the notion of ``Shannon deficiency" between channels, extending the notion of deficiency between two statistical experiments introduced by Le Cam \cite{torgersen}. Topological characterizations of Shannon ordering was considered in \cite{nasser1}. In \cite{yuan}, convex optimization approaches are used to check if two DMCs are ordered. Reference \cite{cuff} considered, without making an explicit connection to channel inclusion, an approximate asymptotic version of channel inclusion in the context of channel synthesis or simulation.

Efforts to concretely establish orders between  communication channel models, such as seen in wireless communications are lacking in the literature. Such channels often require input constraints, and require the definition of inclusion to involve an input degradation to be restricted to a certain class.  This also leads to the idea of restricting both the input {\it and} output degradations to a certain class, rather than considering all possible degradations.

With this background, the contributions in this paper as follows:
\begin{itemize}
\item As an example of output-only degradation additive noise channels are considered, and it is shown that if the noise is from an infinitely divisible distribution, this order constitutes a lattice.
\item Input/output phase degradation of channels that are probability distributions over the torus are considered. The ordering of these channels are studied in terms of their two-dimensional (characteristic function) Fourier series coefficients.
\item Linear Gaussian channels with deterministic and random channel coefficients are considered, where the input degradation matrices have norm constraints.
\end{itemize}

In what follows, channel inclusion will be reviewed. In Section III additive noise channels, and Section IV phase-degradation is considered. Section V studies LGCs, and in Section VI, the paper is concluded with a discussion of future work.
\section{Channel Inclusion and Extensions}
Given two row stochastic matrices, we say that  $\bK_1$  ``includes"  $\bK_2$  if we can write
\begin{eqnarray}
\bK_2 = \sum_{\a \in {\cal A}} g_{\a} \bR_{\a} \bK_1 \bT_{\a} \;,
\label{inc}
\end{eqnarray}
where $\bR_{\a}$ and $\bT_{\a}$ are arbitrary row stochastic matrices representing the input and output degradation channels respectively, and $g_{\a}$ are probabilities over some index set $\alpha \in {\cal A}$. Shannon has showed that for $\bR_{\a}$ and $\bT_{\a}$ one can without loss of generality consider matrices that contain only 0s and 1s rather than the set of all stochastic matrices, and obtain exactly the same partial order. This also means that the sum in (\ref{inc}) is over finitely many terms, without loss of generality. We will use $\bK_2 \inc \bK_1$
to denote this partial order. Note that a pair of channels might not be comparable under this order. Also, the ordered channels do not have to have the same number of inputs or same number of outputs since the stochastic matrices $\bR_{\a}$ and $\bT_{\a}$ are not necessarily square. Equivalence of two channels is defined as  $\bK_2 \equiv \bK_1$ if $\bK_2 \inc \bK_1$ and  $\bK_1 \inc \bK_2$. Note that two channels with different stochastic matrices can be equivalent (consider, for example, row or column permutations of a stochastic matrix).  Since the definition of a partial order requires $\bK_2 \inc \bK_1$ and  $\bK_1 \inc \bK_2$ $\Rightarrow$ $\bK_2 \equiv \bK_1$ we group all channels into equivalence classes and identify with each channel the equivalence class rather than the corresponding stochastic matrix.
\begin{figure}
\centering
\includegraphics[width=8.5 cm, height=4 cm]{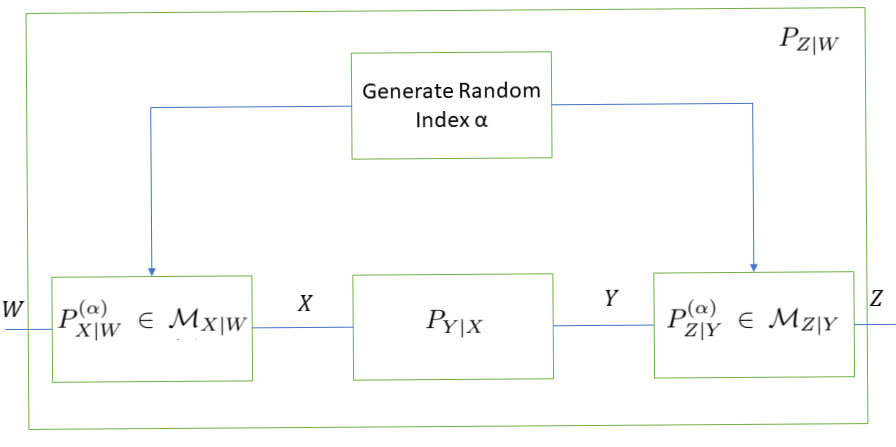}
\label{mainfig2}
\caption{Channel inclusion involves input/output degradation of a channel from a set of channels with shared randomness. The set of channels included by $P_{Y|X}$ is parameterized by the allowed input-degrading set ${\cal M}_{X|W}$ and the output-degrading set ${\cal M}_{X|W}$. Here we have $ P_{Z|W} \inc P_{Y|X}$.}
\vspace{-.3 cm}
\end{figure}

The definition of inclusion can be interpreted as the worse channel being a convex mixture of input/output degraded or ``processed" version of the better channel. This interpretation can be used to extend the inclusion order to channels with continuous inputs/outputs. Let $P_{Y|X}$ be a conditional probability distribution denoting a channel with input random variable $X$ that has  range in $\cX$ to an output random variable $Y$ with range $\cY$. For discrete memoryless channels
$P_{Y|X}$ can be described by a conditional probability mass function $P_{Y|X}(y|x)$, and similarly, when the input/output can take a continuum of values the channel can be described by a conditional probability density function (PDF).
We will use the juxtaposition to denote series concatenation (composition) of two channels (which is stochastic matrix multiplication in the case of DMCs). The composition of two channels is an associative binary operation that is not necessarily commutative (as in the case of stochastic matrix multiplication).  Let  $P^{(\a)}_{Z|Y} \in {\cal M}_{Z|Y}$ and  $P^{(\a)}_{X|W} \in {\cal M}_{X|W}$ be channels indexed by $\alpha$ from pre-determined sets ${\cal M}_{Z|Y}$ and ${\cal M}_{X|W}$ of output degradation and input degradation channels, respectively, and $\alpha$ is an index over a set ${\cal A}$ denoting which input/output degradation pair is used. Let $A$ be a random variable over this index set. Then channel inclusion can be written in its most general form as
\begin{eqnarray}
P_{Z|W} = E\left[ P^{(A)}_{Z|Y}  P_{Y|X}  P^{(A)}_{X|W}\right]
\label{incgeneral}
\end{eqnarray}
where the expectation is over the distribution of the random variable $A$ with range over the index set ${\cal A}$. In this setup, the better channel is  $P_{Y|X}$,  and the worse channel is $P_{Z|W}$, so we have
$P_{Z|W} \inc P_{Y|X}$. In (\ref{incgeneral}) the expectation is an extension of the sum against $g_{\a}$ in (\ref{inc}). Note that if the random variable $A$ is a single point mass with probability 1, then the order amounts to degrading at input and output.


Since a channel including another implies it is better in the Shannon sense, the  achievable error  probability over independent uses of the channel with the best code for any block length is smaller for the including channel than that of the included channel. This means that the channel functional ``best achievable error probability'' for any message set is monotonically decreasing with the inclusion partial order for any block length. This also implies that functionals such as the channel capacity or the error exponent is increasing with channel inclusion.
\section{Additive Noise Channels}
The simplest class of channels beyond DMCs are additive noise channels.
Consider a single use of a channel with input-output relationship given by
$Y=X+V$,
where $X$ is the input, $Y$ is the output, and $V$ is additive noise. The channel $P_{Y|X}$ is completely described by the distribution of $V$. A natural question in this context is how additive noise channels can be ordered. This issue was addressed in \cite{rag} for the case of output degradation only (also known as Blackwell ordering) where it is mentioned that if a noise distribution has another one as a convolution factor, then these channels are ordered. The ordering described here can be viewed in the context of Figure 1, where ${\cal M}_{X|W}$ is empty (no input degradation), and ${\cal M}_{Z|Y}$ is the set of additive noise channels. Additive noise channels constitute partial order
if grouped in equivalence class of noise distributions which are constant translations of the distributions. The variance of the noise is a strictly increasing functional of this ordering in the sense that if two channels are ordered and not equal, then their variances will also be strictly ordered on the positive real line.
\subsection{Lattice of Infinitely Divisible Additive Noise Channels}
A lattice is a partial order where any two elements have a least upperbound, and also a greatest lowerbound within the partial order. Shannon showed in \cite{partial} that binary DMCs constitute a lattice, but it is unknown if this extends to larger numbers of inputs and outputs. In what follows, we discuss what kinds of restrictions on the noise class yields the lattice property for additive noise channels.

Consider as an example the set of all zero-mean continuous uniform distributions. Since a uniform density  cannot be decomposed into a convolution of two densities, no pair in this set of additive noise distributions can be ordered. In order theory, such a partial order is called an {\it antichain}.
On the other extreme, for the case of additive Gaussian noise channels, the partial order of having a convolution factor constitutes a total order (any two channels are ordered). When the additive noise is Gaussian, and ${\cal M}_{Z|Y}$ is the set of additive white Gaussian noise (AWGN) channels, the ordering reduces to comparisons of variances of Gaussian distributions. This total order is also a lattice where the least upperbound is the distribution with the larger variance, and the greatest lower bound is the distribution with the smaller variance.

For a more interesting example, consider the set of zero-mean infinitely divisible noise distributions with finite variance for the additive noise channels that are being ordered, as well as for the set ${\cal M}_{Z|Y}$. A distribution is infinitely divisible if the associated random variable (RV) can be written  as a sum of $n$ i.i.d. random variables, for every $n$ (see e.g., \cite{lukacs}). A fundamental representation theorem asserts that the logarithm of the characteristic function for infinitely divisible random variables with zero-mean and finite variance is of the form
\begin{equation}
\log(\phi(j\zeta)) = \int_{-\infty}^\infty \left(e^{j\zeta u} -1 - j \zeta u \right) \frac{dK(u)}{u^2} \;,
\label{rep}
\end{equation}
where $K(\cdot)$ is a non-decreasing and bounded function such that $K(-\infty) = 0$, so that $K(\cdot)$ has the properties of a positively scaled cumulative distribution function. The integrand in (\ref{rep}) is defined for $u=0$ to be $-\zeta^2/2$.
Clearly, the sum of two independent infinitely divisible random variables is also infinitely divisible, and the representation of the sum is of the form (\ref{rep}) with the $K$ function for the sum being the sum of the $K$ functions of the random variables. Then checking if two distributions with corresponding functions $K_1(u)$ and $K_2(u)$ are ordered, amounts to verifying that the difference $K_{\rm d}(u):=K_1(u)-K_2(u)$ is a non-decreasing and bounded function. This nondecreasing property can be re-expressed as $dK_{\rm d}(u)/du \geq 0$ where the function is differentiable, and when the function has jump discontinuities they are positive.

We now proceed to verify that additive infinitely divisible noise channels form a lattice. To find the least upperbound of two distributions with corresponding $K_1(u)$ and $K_2(u)$ that are absolutely continuous, we have
\begin{equation}
\frac{dK_{\rm lub}(u)}{du} = \max\left(\frac{dK_1(u)}{du},\frac{dK_2(u)}{du}\right) \;,
\label{dirac}
\end{equation}
which, along with $K_{\rm lub}(-\infty)=0$ is the smallest $K(\cdot)$ function that is bigger than both
$K_1$ and $K_2$.
When $K_1(u)$ and $K_2(u)$ have jump discontinuities in addition to an absolutely continuous part, then the derivatives of the jump discontinuities can be captured by a Dirac delta function. The maximum of two Dirac delta functions in different locations will be their sum, and the maximum of two Dirac delta functions in the same location will be another delta function with area that is the maximum of the areas of the two.
The dual operation of the greatest lower bound is the same except with a minimum rather than a maximum in the above description.

So far we have been implicitly assuming that the noise samples are independent across time so that the comparison of channels amounts to comparing univariate distributions. For the case of AWGN, we saw that this amounts to a total order. If the noise samples are Gaussian process, not necessarily   independent, then comparison of channels amounts to point-wise comparison of their power spectral densities, because the spectral densities add when independent Gaussian processes are summed. Mathematically, the operations of comparing channels, or finding least upper / greatest lower bounds   is identical to that of comparison of univariate infinitely divisible distributions. The correspondence is that the spectral distribution function and $K(\cdot)$ are analogous (positive, monotonically increasing, and bounded), and the spectral density with the derivative of $K(\cdot)$.
\vspace{-0.2 cm}
\section{Phase-Degradation of Channels}
\vspace{-0.2 cm}
For additive noise channels, we only allowed output-degradation. This is because additive noise applied at the input side can be combined with the output degradation due to the fact that addition is commutative. So even if shared randomness is allowed, as in Figure 1, the input/output degradation action of independent additive noise can be equivalently reduced to an output-only degradation. This is not possible in DMCs due to the fact that stochastic matrix multiplication is non-commutative. We now investigate another example where the input and output degradations actions are not commutative.
Consider a single use of a channel given by
\vspace{-0.1 cm}
\begin{eqnarray}
\vspace{-0.1 cm}
Y = |H| e^{j\Theta_H} X + |V|e^{j\Theta_V}
\label{phase}
\end{eqnarray}
where $X$ is the input, $H=|H| e^{j\Theta_H}$ is a complex-valued fading channel coefficient, and
$V=|V|e^{j\Theta_V}$ is complex-valued noise.
\begin{figure}
\vspace{-1.1 cm}
\centering
\includegraphics[width=9 cm, height=8.0 cm]{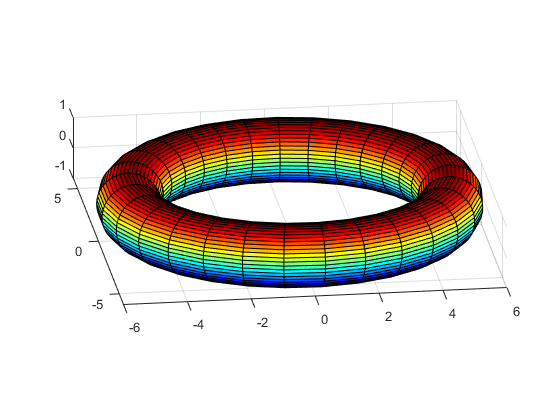}
\vspace{-1.9 cm}
\caption{The joint phase distribution of the channel and the noise $(\Theta_H,\Theta_V)$ represents a random variable on the torus. The vertical slices represent $\Theta_H$, and the horizontal slices correspond to $\Theta_{V}$ values. Input/output phase degradation $(\ti,\to)$ can also be viewed as a  distribution over the torus. }
\vspace{-.6 cm}
\end{figure}
For simplicity, we begin by assuming that $|H|$ and $|V|$ are deterministic constants. In that case, the probability distribution of the channels is characterized by the joint PDF $P_{\Theta_H\Theta_V}(\theta_H,\theta_V) \;$ with $(\theta_H,\theta_V) \in [0,2\pi)^2$, which is a probability distribution over the torus (please see Fig. 2). Since the angle random variables have bounded support, the joint PDF can be expanded using a two dimensional Fourier series (FS) which is the characteristic function:
\begin{equation}
\small
P_{\Theta_H\Theta_V}(\theta_H,\theta_V) =  \sum_{m = -\infty }^\infty \sum_{n=-\infty}^\infty \frac{\phi_{H,V}[m,n] }{(2\pi)^2}e^{-j (m \theta_H + n \theta_V)} 
\end{equation}
with the two-dimensional complex-valued FS coefficients given by
\begin{equation}
\phi_{H,V}[m,n] = E[e^{j(m\Theta_H + n \Theta_V)}] \;. \label{FS}
\end{equation}
The class of degradations that we will consider for this channel are random phase shifts from the input (multiplying $X$ with $\exp(j\Theta_{\rm i}$)), or output (multiplying $Y$ with $\exp(j\Theta_{\rm o}$)), with the pair of phases $(\ti, \to)$ (a random point on the torus) with an arbitrary joint distribution, independent of $(\Theta_H,\Theta_V)$. So the degraded channel has an input/output relation
\begin{eqnarray}
Z = |H| e^{j(\Theta_H+\to+\ti)} W + |V|e^{j(\Theta_V+\to)} \;.
\label{phase2}
\end{eqnarray}
In the FS domain, this amounts to multiplying $\phi_{H,V}[m,n]$ pointwise with another set $d[m,n]$ of FS for the joint distribution of $(\to+\ti,\to)$ which is two-dimensional circular convolution of bivariate PDFs.

Because the output degradation affects both the channel and noise phases, whereas the input-degradation only effects the signal phase, these operations cannot be combined into one phase output-degradation operation, unlike the additive noise case. Also, unlike the additive noise case, there is a ``worst channel'', where both the signal and noise phases are uniformly distributed and independent of one another (uniform distribution over the torus) with a corresponding set of FS coefficients $\phi_{H,V}[m,n] = \delta[n]\delta[m]$, where $\delta[\cdot]$ is the Kronecker delta. Moreover, using (\ref{phase2}) it is easy to see that for a fixed channel phase $\Theta_H$ the most degraded noise phase distribution is uniform (this can be done by setting $\to=-\ti$ and selecting $\to$ to be  uniform).
The corresponding FS coefficients are given by $\phi_{H,V}[m,n] = E[e^{j m \Theta_H}] \delta[n]$.
Also, for a fixed noise phase distribution, the worst channel phase distribution is also uniform, which is obtained by input-degrading with a uniform phase. The corresponding FS coefficients are given by $\phi_{H,V}[m,n] = \delta[m]  E[e^{j n \Theta_V}]$. Similar results about extremal channels can be derived using the mutual information. However, since channel inclusion is not implied by ordering mutual informations, the results mentioned here are stronger.

Channels where the phase of the noise is uniform and the channel phase is arbitrary can be described by the characteristic function (FS coefficients) $\phi_V[n]$. Similar to the additive noise case, restricting to phase distributions that are infinitely divisible on the circle we can obtain a lattice. In this case, $\phi_V[m]=\phi^k[m]$ for every $k \in \mathbb{N}$ for some characteristic function sequence $\phi[n]$. For example, starting with an infinitely divisible distribution on the real line, and wrapping it on the circle one obtains an infinitely divisible distribution on the circle whose characteristic function FS coefficients are given by samples of the characteristic function of the original continuous distribution. Examples include wrapped Gaussian, or wrapped Cauchy distributions \cite{mardia}.

Note that a channel described by the pair of random variables $(\Theta_H, \Theta_V)$ is clearly equivalent to one described by $(\Theta_H + \gamma_1, \Theta_V + \gamma_2 )$ where $\gamma_i$ are deterministic angles.
More generally, we will define phase degradations ``strict'', if they cannot be undone with a subsequent phase degradation.
In fact we can prove the following theorem about phase degradations that are strict:
\begin{thm}
Suppose the channel in (\ref{phase}) is degraded into (\ref{phase2}) through input/output phase random variables $(\ti, \to)$, and $\phi_{H,V}[m,n]  \neq  \delta[n]\delta[m]$ (i.e., the channel and noise phases are not uniformly distributed over the torus). If for any $a,b,k \in \mathbb{Z}$, and $\gamma \in [0,2\pi)$ the random variables do not satisfy
$ a \ti +b  \to = 2 \pi k+ \gamma $ with probability one, then this degradation cannot be undone with another phase-degradation operation.
\label{thm0}
\end{thm}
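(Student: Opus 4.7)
The plan is to pass everything through the Fourier series (characteristic function) coordinates used in the discussion preceding the theorem, and reduce ``cannot be undone'' to a statement about moduli of FS coefficients. First, I would compose (\ref{phase2}) with a hypothetical subsequent phase degradation $(\ti',\to')$ --- which may itself be a mixture of such pairs through an independent shared-randomness variable. Tracking that the signal path accumulates $\to+\ti+\to'+\ti'$ while the noise path accumulates only $\to+\to'$, and using independence of $(\ti',\to')$ from $(\Theta_H,\Theta_V)$ and from $(\ti,\to)$, a direct computation shows that the FS coefficients of the composite channel equal $\phi_{H,V}[m,n]\,d[m,n]\,d'[m,n]$, where $d[m,n]:=E[e^{jm\ti+j(m+n)\to}]$ and $d'[m,n]$ is defined analogously for the primed variables.

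Second, I would observe that ``undoing'' the degradation means $d[m,n]\,d'[m,n]=1$ for every $(m,n)$ with $\phi_{H,V}[m,n]\neq 0$. Since $d$ and $d'$ are values of characteristic functions of random variables on the torus, both lie in the closed unit disk, so this identity forces $|d[m,n]|=|d'[m,n]|=1$ on the support of $\phi_{H,V}$.

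Finally, I would invoke the classical fact that a characteristic function has unit modulus at a frequency iff the underlying linear combination of the random variables is almost surely equal to a constant (modulo $2\pi$ here, since the variables live on the circle). Applied to $d[m,n]$ with $a:=m$ and $b:=m+n$, this says $|d[m,n]|=1$ forces $a\ti+b\to=2\pi k+\gamma$ almost surely for some integer $k$ and some $\gamma\in[0,2\pi)$. The hypothesis then forces $(a,b)=(0,0)$, i.e.\ $(m,n)=(0,0)$. But by assumption $\phi_{H,V}[m,n]\neq \delta[m]\delta[n]$, so some $(m,n)\neq (0,0)$ lies in the support, yielding the required contradiction.

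The main obstacle is handling ``another phase-degradation operation'' in full generality --- it may be an arbitrary convex mixture of primed phase pairs via shared randomness, so $d'[m,n]$ is a generic characteristic function rather than a single exponential. The bound $|d'[m,n]|\le 1$ is automatic for any such mixture, which is precisely why the argument goes through uniformly without having to analyse the internal structure of $d'$ any further.
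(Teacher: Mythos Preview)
Your argument is correct and follows essentially the same route as the paper: pass to the two-dimensional Fourier coefficients, use that composition of phase degradations multiplies by factors of modulus at most one, and conclude that undoing forces $|d[m,n]|=1$ at some nonzero index, whence the $\cos$-equals-one argument yields an almost-sure affine relation contradicting the hypothesis. Your write-up is in fact slightly more careful than the paper's in two places: you make explicit the second degradation $(\ti',\to')$ and the product structure $\phi_{H,V}\,d\,d'$, and you restrict the unit-modulus conclusion to the support of $\phi_{H,V}$ before invoking the existence of a nonzero $(m,n)$ there.
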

\begin{proof}
Since $\phi_{H,V}[m,n]  \neq  \delta[n]\delta[m]$, there exists $n,m \in \mathbb{Z}$ such that $|\phi_{H,V}[m,n]|  >0$. Recall that the phase degradation operation is simply an element-wise multiplication of the two-dimensional FS coefficients. Also, the magnitude of the FS coefficients (characteristic function) are less than or equal to one. So two successive phase-degradation operations that undo one another must have a magnitude of one, which means that the FS coefficients $d[m,n]$ of the joint distribution of $(\to+\ti,\to)$ satisfies $|d[m,n]|=1$. Using this, we can write
$d[m,n] = e^{j\gamma}$, or $E[e^{j(m(\to+\ti) + n\to -\gamma}]=1$, which implies $E[\cos((a\ti + b\to -\gamma)]=1$ for integers $a=m$ and $b=m+n$. But since $\cos(x) \leq 1$, we must have $\cos((a\ti + b\to -\gamma)=1$ almost surely. But this contradicts with the assumption in the theorem.
\end{proof}
\vspace{-.2 cm}
So far we have assumed that the channel and noise magnitudes are fixed values. In general, these could be random variables, or indeed, stochastic processes across time. In addition to degrading of the phases, one refinement of (\ref{phase2}) is $Z = |H| |U| e^{j(\Theta_H+\to+\ti)} W + |V|e^{j(\Theta_V+\to)}$, where $|U| \leq 1$ is a positive random variable with an arbitrary distribution, and all the random variables can be a function of a time index. Intuitively the reason for the magnitude constraint on $U$ is to ensure that the input degradation does not improve the SNR; mathematically, this is done by restricting the class of input degradations.
For channels with input constraints, the set of possible input-degrading channels ${\cal M}_{X|W}$ can be appropriately chosen to respect an input constraint.
To expand on this idea, we consider a more general setup where the channels and the degradation operations are matrices.
\vspace{-.1 cm}
\section{Linear Gaussian Channels}
\vspace{-.1 cm}
\label{LGC}
In this section we focus on LGCs where the input/output processors are restricted to linear operators (matrices). So the channel juxtaposition  is a linear operation.
Consider a Gaussian multiple input multiple output linear channel model described by
$\bY = \bH \bX + \bV$
where $\bX$ is the input vector, $\bY$ is the output vector, $\bH$ is a (not necessarily square or full-rank) matrix with deterministic coefficients, and $\bV$ is an additive Gaussian noise vector with covariance matrix $\bS$. We will assume that all vectors and matrices are real valued, keeping in mind that extensions to the complex case are rather straightforward. The channel $P_{\by|\bx}$ can be described by a multivariate Gaussian distribution ${\cal N}(\bH\bx,\bS)$. We will assume for convenience that the covariance matrix is full rank, and associate with the linear Gaussian channel the pair of matrices $(\bH,\bS)$.
Consider an arbitrary output processing matrix $\bC$, and input  processing matrix restricted to matrices having bounded operator norm.
So we can write $(\bC\bH\bB,\bC\bS\bC^T) \inc (\bH,\bS)$, where $\|\bB\| \leq 1$. We overload the notation $\inc$ to the matrix pair (input transformation, covariance matrix) that describes the linear Gaussian channel.
We have the following:.
\begin{thm}
If all the singular values of $\bB$ are 1,  $\bB$ is right-invertible, and $\bC$ is left-invertible, then
\begin{equation}
(\bC\bH\bB,\bC\bS\bC^T)  \equiv  (\bH,\bS). \label{lgcinc}
\end{equation}
\label{thm1}
\end{thm}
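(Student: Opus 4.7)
The plan is to unpack the equivalence $\equiv$ as mutual inclusion and show both directions. The forward direction $(\bC\bH\bB,\bC\bS\bC^T)\inc(\bH,\bS)$ is essentially built into the statement: taking a single (non-random) input/output degradation pair with input degrader $\bB$ and output degrader $\bC$ reproduces exactly the degraded channel. The only thing to double-check is that $\bB$ is admissible, i.e.\ has $\|\bB\|\le 1$; but since all singular values of $\bB$ equal $1$, we have $\|\bB\|=1$, so this side is free.

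For the reverse direction $(\bH,\bS)\inc(\bC\bH\bB,\bC\bS\bC^T)$ I would construct explicit degradation matrices that invert the effect of $\bC$ and $\bB$. The key structural observations are:
\begin{itemize}
\item Right-invertibility of $\bB$ plus the fact that all its singular values are $1$ forces $\bB$ to be a (semi-)orthogonal matrix with $\bB\bB^T=\bI$. Hence $\bB^T$ itself is a right-inverse of $\bB$, and moreover $\|\bB^T\|=1$, so $\bB^T$ is an allowed input degradation.
\item Left-invertibility of $\bC$ gives some matrix $\bC^{+}$ with $\bC^{+}\bC=\bI$, which we can use as an output degradation (no norm restriction on outputs).
\end{itemize}
Then I apply $\bB^T$ as input degradation and $\bC^{+}$ as output degradation to $(\bC\bH\bB,\bC\bS\bC^T)$ and verify that the channel matrix collapses to $\bH$ via $\bC^{+}(\bC\bH\bB)\bB^T=\bH$, and the noise covariance collapses to $\bS$ via
\[
\bC^{+}(\bC\bS\bC^T)(\bC^{+})^T \;=\; \bS\,(\bC^{+}\bC)^T \;=\; \bS,
\]
using $\bC^{+}\bC=\bI$ twice. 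A single point-mass choice of $A$ in (\ref{incgeneral}) suffices, so no mixture is needed.

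The main subtlety, more conceptual than technical, is recognizing that the norm constraint $\|\cdot\|\le 1$ on input degradations is compatible with inverting $\bB$: in a generic MIMO setting the inverse of a contraction is an expansion and would be disallowed, but the singular-value-one hypothesis is precisely what makes $\bB^T$ simultaneously a right-inverse and a unit-norm (allowed) degradation. Once this is in place, the verification is a direct matrix computation, and the theorem drops out.
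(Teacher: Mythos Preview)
Your proposal is correct and follows essentially the same argument as the paper: the paper invokes the pseudo-inverses $\bC^\dagger$ and $\bB^\dagger$ as output and input degradations, noting that $\|\bB^\dagger\|=1$ because the singular values of $\bB$ are all $1$. Your observation that the hypotheses force $\bB\bB^T=\bI$, so that $\bB^T$ is itself the unit-norm right inverse, is exactly the same device (indeed $\bB^\dagger=\bB^T$ here), just stated more explicitly.
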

\vspace{-1.1 cm}
\begin{proof}
That the right hand side includes the left hand side in (\ref{lgcinc}) follows by definition.
To show
$(\bH,\bS) \inc (\bC\bH\bB,\bC\bS\bC^T) $ one can use pseudo-inverses $\bC^\dagger$, and $\bB^\dagger$ to get $(\bC^\dagger \bC\bH\bB \bB^\dagger,\bC^\dagger\bC\bS\bC^T\bC^{\dagger T})
$, which will yield the original channel $(\bH,\bS)$. Because the singular values of $\bB$ are 1, we also have  $\|\bB^\dagger\| = 1$ so that $\bB$ is a valid input degradation.
\end{proof}
Note that it is possible that the degrading input or output matrices do not have full rank while retaining equivalence, if the original channel matrix $\bH$ also does not have full rank. Consider, for example the extreme case where $\bH={\bf 0}$.

We will now show that for two linear Gaussian channels to be ordered with the above-described restriction of channel inclusion, the sorted singular values of the equivalent whitened channels will be element-wise ordered. This intuitively corresponds to sorting the signal to noise ratios and element-wise comparing them. Since the covariance matrix is full rank, we can use  Theorem 2 and set $\bC = \bS^{-1/2}$, $\bB={\bf I}$, we establish that every channel is equivalent to one with white noise:
$(\bH,\bS) \equiv (\bS^{-1/2}\bH,\bI)$ where $\bI$ is the identity matrix. Performing a singular value decomposition (SVD) to the whitened channel matrix
$\bS^{-1/2}\bH$ and using the resulting orthogonal matrices in the input and output degradation, one can convert any deterministic channel into a matrix
$\bL$ with the same dimensions as $\bH$ with a diagonal matrix with sorted singular values along its diagonal embedded in it, as in the classical SVD.
This shows that comparison of two LGCs amounts to comparing two sets of deterministic sorted singular values:
\begin{thm}
Two linear Gaussian channels are ordered if and only if the sorted singular values of their equivalent channel matrices are element-wise ordered. Moreover, this partial order is a lattice in the sense that any two channels has a greatest lower bound given by the element-wise minimum of the two positive vectors,  and a least upper bound given by the element-wise maximum.
\end{thm}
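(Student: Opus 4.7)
\emph{Proof plan.} Following the discussion preceding the theorem, I would first reduce the ordering problem to one about sorted singular value vectors. By Theorem~\ref{thm1} applied with $\bC=\bS^{-1/2}$, $\bB=\bI$, every LGC $(\bH,\bS)$ is equivalent to the whitened channel $(\bS^{-1/2}\bH,\bI)$; applying the SVD $\bS^{-1/2}\bH=\bU\bL\bV^T$ and Theorem~\ref{thm1} again with $\bC=\bU^T$, $\bB=\bV$ (orthogonal, so all singular values equal one and the matrices are invertible) yields the canonical form $(\bL,\bI)$ with $\bL$ carrying the sorted singular values on its diagonal. After padding with zero rows and columns to match dimensions, each LGC is identified with a sorted non-increasing non-negative vector $\sigma$, and the claim reduces to: $(\bL_2,\bI)\inc(\bL_1,\bI)$ iff $\sigma^{(2)}\le\sigma^{(1)}$ coordinate-wise.

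For the ``if'' direction, given $\sigma_i^{(2)}\le\sigma_i^{(1)}$ for every $i$, I would take $\bC=\bI$ and diagonal $\bB$ with $b_i=\sigma_i^{(2)}/\sigma_i^{(1)}$ when $\sigma_i^{(1)}>0$ and $b_i=0$ otherwise; then $\|\bB\|\le 1$ and $(\bC\bL_1\bB,\bC\bI\bC^T)=(\bL_2,\bI)$, which gives the inclusion directly.

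For the ``only if'' direction, suppose $(\bL_2,\bI)\inc(\bL_1,\bI)$, so there exist $\bC$ and $\bB$ with $\|\bB\|\le 1$ and $(\bC\bL_1\bB,\bC\bC^T)\equiv(\bL_2,\bI)$. Whitening the left pair via Theorem~\ref{thm1} produces an equivalent canonical channel with effective matrix $\mathbf{P}\bL_1\bB$, where $\mathbf{P}:=(\bC\bC^T)^{-1/2}\bC$. A direct calculation gives $\mathbf{P}\mathbf{P}^T=\bI$, so $\mathbf{P}$ is a partial isometry with $\|\mathbf{P}\|=1$. The multiplicative singular-value inequality then yields
\[
\sigma_i(\mathbf{P}\bL_1\bB)\le\|\mathbf{P}\|\,\sigma_i(\bL_1)\,\|\bB\|\le\sigma_i(\bL_1)
\]
for every $i$. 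Since equivalence of canonical forms forces $\sigma_i(\mathbf{P}\bL_1\bB)=\sigma_i^{(2)}$, we obtain $\sigma^{(2)}\le\sigma^{(1)}$.

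The lattice assertion is then immediate: the element-wise max and min of two sorted non-negative vectors are themselves sorted non-negative vectors, hence correspond to canonical channels in the class, and the universal properties of least upper bound and greatest lower bound transfer directly from those of $\max$ and $\min$ on $\mathbb{R}_{\ge 0}$. The main obstacle I anticipate lies in the necessity direction: making the whitening step fully rigorous when $\bC$ is not full row rank so $\bC\bC^T$ is singular (the argument must then be restricted to the range of $\bC$), and managing the zero-padding bookkeeping so that element-wise comparison is well defined when the two channels have different input or output dimensions.
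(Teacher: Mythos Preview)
Your plan is correct and in fact goes further than the paper does. The paper offers no formal proof of this theorem: the text preceding the statement carries out exactly the reduction you describe (whiten via $\bC=\bS^{-1/2}$, then diagonalize via the SVD using Theorem~\ref{thm1}), declares that ``comparison of two LGCs amounts to comparing two sets of deterministic sorted singular values,'' and then simply states the theorem. The explicit diagonal contraction $b_i=\sigma_i^{(2)}/\sigma_i^{(1)}$ for sufficiency, the multiplicative singular-value bound for necessity, and the check that coordinate-wise $\max/\min$ of sorted vectors remain sorted are all details the paper omits; your proposal fills them in soundly.

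One simplification in the necessity direction: under the paper's setup the input/output processors for deterministic LGCs are \emph{deterministic} matrices, so $(\bL_2,\bI)\inc(\bL_1,\bI)$ already yields $\bC\bL_1\bB=\bL_2$ and $\bC\bC^T=\bI$ exactly (not merely up to equivalence). Hence $\bC$ has orthonormal rows from the outset, $\|\bC\|=1$, and your whitening via $\mathbf P=(\bC\bC^T)^{-1/2}\bC$ is unnecessary; this also dissolves the rank-deficiency worry you flag at the end. With that shortcut the argument is a one-liner: $\sigma_i^{(2)}=\sigma_i(\bC\bL_1\bB)\le\|\bC\|\,\sigma_i(\bL_1)\,\|\bB\|\le\sigma_i^{(1)}$.
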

Note also that two channels under this ordering might not be comparable. For example, for a $2\times 2$ MIMO channel defining $\sigma_i^{(k)}$ to be the $k^{{\rm th}}$ sorted singular value of the $i^{{\rm th}}$ channel, it is possible that  $\sigma_1^{(1)} > \sigma_2^{(1)}$ and  $\sigma_1^{(2)} < \sigma_1^{(2)}$. So the SNR of the first ``stream" is better for the first channel compared to the second channel, but the situation is reversed for the SNR of the second stream for this pair of unordered channels.

\subsection{Random Channel Coefficients}

In ordering random matrix channels with AWGN, the output degradations will be assumed to be unitary to maintain the whiteness of the noise, and the input degradations will be assumed to satisfy operator norm less than one, like before. The difference will be that they will be chosen randomly with shared randomness as depicted in Figure 1.
Consider  $\bH$ having random coefficients and $\bV$ being white noise with identity covariance matrix so that
$$P_{\bY|\bX} = C E[\exp(-\| \bY-\bH\bX\|^2/2),$$
where the expectation is with respect to $\bH$, and $C$ is a normalizing constant for the PDF.
Consider a pair of random matrices $(\ui, \uo)$ so that $\|\ui\| \leq 1$ with an otherwise arbitrary joint distribution so that
$\bX=\ui \bW$, and
$\bZ = \uo \bY$. Then we have
$$P_{\bZ|\bW} = C E[\exp(-\| \bZ-\uo\bH\ui\bW\|^2/2),$$
where the expectation is with respect to $\bH$, and the random matrices $(\ui, \uo)$. Note that the norm constraint on $\ui$ rules out SNR-improving degradations, which is one way to restrict the degrading set ${\cal M}_{X|Y}$.

We distinguish two cases. If the random channel matrix $\bH$  is available at the transmitter and the receiver, then matrices $(\ui, \uo)$ can be selected as the orthogonal matrices in the SVD of $\bH$. In this case every channel matrix is fully characterized by the joint distribution of its singular values. Moreover, since the input degradation matrix can have norm less than 1, the ordering of two such channels means that the singular values of one channel is ``usual multivariate stochastically ordered'' with respect to another \cite[pp. 66]{shaked}. In particular, this means that the expected value of any coordinate-wise nondecreasing function of the singular values are ordered.
If the distribution function of the singular values for the two channels share a common copula, which captures the dependence structure, then this multivariate stochastic order
is a lattice \cite{scarsini}.

If the channel is unknown at the transmitter and receiver then input/output degradations with random unitary matrices can cause strict degradations, unlike the known channel case. In this case, the distribution of the singular values of the channel does not determine the channel. In fact, for a given distribution of the singular values, extremal channels in this case can be obtained by the Haar measure on the input and output degrading matrices. For orthogonal matrices, such a ``uniform'' distribution can be obtained by using the orthogonal part of  a QR factorization of an i.i.d. Gaussian matrix.

\section{Conclusions and Future Work}
Extensions of Shannon's channel inclusion partial order beyond DMCs are considered. It is shown that restricting the degradation can make the partial order into a lattice, as seen for the case of infinitely divisible additive noise channels. Degradation of channel phase is shown to be captured by a probability distribution over the torus. Extremal channels in this context are identified. MIMO channels with AWGN  with deterministic and random channel coefficients are considered.

Future work includes extensions to linear filtering channels where the channels and the degradation operations are convolutions. This amounts to the degradation of the SNR at each frequency. However, with causality and stability restrictions imposed due to practical considerations, the problem is more challenging.
Another interesting direction is considering multiple access and broadcast channels. In these cases extracting a deterministic code for the better channel from a random code is not straightforward since there is more than one error probability performance metric.
Finally, explicit computation of the Shannon deficiency for the channel models considered is an interesting direction.
\newpage
\bibliographystyle{IEEEtran}
\bibliography{ordering}

\begin{thebibliography}{10}
\providecommand{\url}[1]{#1}
\csname url@samestyle\endcsname
\providecommand{\newblock}{\relax}
\providecommand{\bibinfo}[2]{#2}
\providecommand{\BIBentrySTDinterwordspacing}{\spaceskip=0pt\relax}
\providecommand{\BIBentryALTinterwordstretchfactor}{4}
\providecommand{\BIBentryALTinterwordspacing}{\spaceskip=\fontdimen2\font plus
\BIBentryALTinterwordstretchfactor\fontdimen3\font minus
  \fontdimen4\font\relax}
\providecommand{\BIBforeignlanguage}[2]{{%
\expandafter\ifx\csname l@#1\endcsname\relax
\typeout{** WARNING: IEEEtran.bst: No hyphenation pattern has been}%
\typeout{** loaded for the language `#1'. Using the pattern for}%
\typeout{** the default language instead.}%
\else
\language=\csname l@#1\endcsname
\fi
#2}}
\providecommand{\BIBdecl}{\relax}
\BIBdecl

\bibitem{partial}
C.~E. Shannon, ``A note on a partial ordering for communication channels,''
  \emph{Information and Control}, vol.~1, no.~4, pp. 390--397, 1958.

\bibitem{shannonlattice}
C.~E. {Shannon}, ``The lattice theory of information,'' \emph{Transactions of
  the IRE Professional Group on Information Theory}, vol.~1, no.~1, pp.
  105--107, 1953.

\bibitem{nasser1}
R.~{Nasser}, ``Characterizations of two channel orderings: Input-degradedness
  and the {Shannon} ordering,'' \emph{IEEE Transactions on Information Theory},
  vol.~64, no.~10, pp. 6759--6770, 2018.

\bibitem{nasser2}
------, ``A characterization of the {Shannon} ordering of communication
  channels,'' in \emph{2017 IEEE International Symposium on Information Theory
  (ISIT)}, 2017, pp. 2448--2452.

\bibitem{rag}
M.~{Raginsky}, ``Shannon meets {Blackwell and Le Cam}: Channels, codes, and
  statistical experiments,'' in \emph{2011 IEEE International Symposium on
  Information Theory Proceedings}, 2011, pp. 1220--1224.

\bibitem{polyanskiy}
A.~{Makur} and Y.~{Polyanskiy}, ``Comparison of channels: Criteria for
  domination by a symmetric channel,'' \emph{IEEE Transactions on Information
  Theory}, vol.~64, no.~8, pp. 5704--5725, 2018.

\bibitem{yuan}
Y.~{Zhang} and C.~{Tepedelenlioglu}, ``Analytical and numerical
  characterizations of {Shannon} ordering for discrete memoryless channels,''
  \emph{IEEE Transactions on Information Theory}, vol.~60, no.~1, pp. 72--83,
  2014.

\bibitem{ck}
I.~Csiszar and J.~Korner, \emph{Information Theory: Coding Theorems for
  Discrete Memoryless Systems}, 2nd~ed.\hskip 1em plus 0.5em minus 0.4em\relax
  Cambridge University Press, 2011.

\bibitem{torgersen}
E.~Torgersen, \emph{Comparison of Statistical Experiments}, ser. Encyclopedia
  of Mathematics and its Applications.\hskip 1em plus 0.5em minus 0.4em\relax
  Cambridge University Press, 1991.

\bibitem{cuff}
P.~{Cuff}, ``Distributed channel synthesis,'' \emph{IEEE Transactions on
  Information Theory}, vol.~59, no.~11, pp. 7071--7096, 2013.

\bibitem{lukacs}
E.~Lukacs, \emph{Characteristic Functions}.\hskip 1em plus 0.5em minus
  0.4em\relax Griffin, London, 1970.

\bibitem{mardia}
K.~V. Mardia and P.~E. Jupp, \emph{Directional Statistics}.\hskip 1em plus
  0.5em minus 0.4em\relax John Wiley and Sons, 2009, vol. 494.

\bibitem{shaked}
M.~Shaked and J.~G. Shanthikumar, \emph{Stochastic Orders}.\hskip 1em plus
  0.5em minus 0.4em\relax Springer, 2011.

\bibitem{scarsini}
\BIBentryALTinterwordspacing
M.~Scarsini and A.~Muller, ``{Stochastic order relations and lattices of
  probability measures},'' \emph{{SIAM Journal on Optimization}}, vol. Vol.16,
  N{\textdegree}4, pp. pp.1024 -- 1043, 2006. [Online]. Available:
  \url{https://hal-hec.archives-ouvertes.fr/hal-00539119}
\BIBentrySTDinterwordspacing

\end{thebibliography}
\end{document}